\def\vec{{\rm vec}}
\newtheorem{theorem}{Theorem}
\newtheorem{definition}{Definition}
\newtheorem{remark}{Remark}
\newtheorem{example}{Example}
\newcommand{\D}{\mathcal{D}}
\newcommand{\R}{\mathbb{R}}
\renewcommand{\S}{\mathcal{S}}
\newcommand{\X}{\mathcal{X}}
\newcommand{\A}{\mathcal{A}}
\newcommand\sbullet[1][.5]
\DeclareMathOperator*{\Bigcdot}{\scalerel*{\cdot}{\bigodot}}
\DeclareMathOperator*{\argmin}{arg\, min} 
\DeclareMathOperator*{\argmax}{arg\, max} 
\title{\LARGE \bf
Polynomial Lyapunov Functions and Invariant Sets from a New Hierarchy of Quadratic Lyapunov Functions for LTV Systems}
\author{Hassan Abdelraouf$^1$,  \thanks{$^1$ PhD candidate, Aerospace Engineering, University of Illinois at Urbana-Champaign, Illinois, USA, hassana4@illinois.edu.}
       Eric Feron$^2$, \thanks {$^2$ Professor, Department of  Electrical Engineering, King Abdullah University of Science and Technology, Thuwal, KSA,  eric.feron@kaust.edu.sa.} and
      Jeff S. Shamma$^3$ \thanks{$^3$ Professor, Industrial and Enterprise Systems Engineering, Univresity of Illinois at Urbana-Champaign, Illinois, USA,  jshamma@illinois.edu}}
\begin{document}

\maketitle
\thispagestyle{empty}
\pagestyle{empty}

\begin{abstract}
  We introduce a new class of quadratic functions based on a hierarchy of linear time-varying (LTV) dynamical systems. These quadratic functions in the higher order space can be also seen as a non-homogeneous polynomial Lyapunov functions for the original system, i.e the first system in the hierarchy. These non-homogeneous polynomials are used to obtain accurate outer approximation for the reachable set given the initial condition and less conservative bounds for the impulse response peak of linear, possibly time-varying systems. In addition, we pose an extension to the presented approach to construct invariant sets that are not necessarily Lyapunov functions. The introduced methods are based on elementary linear systems theory and offer very much flexibility in defining arbitrary polynomial Lyapunov functions and invariant sets for LTV systems. 
\end{abstract}

\section{INTRODUCTION}
Linear time-varying (LTV) systems appear in many applications such as changing the aerodynamic coefficients of the aircraft with the flight speed and altitude or changing the parameters of chemical plants or electrical circuits. In control theory, it is common to treat some nonlinear systems as LTV systems and linear switching systems by 
 local linearization around a set of operating points \cite{boyd1994linear}. Lyapunov functions are a  widely used concept to tackle stability and related analyses for linear systems in control literature. For example, the stability of a linear time-invariant (LTI) system is equivalent to the existence of  a quadratic Lyapunov function. However, the stability of LTV systems does not necessarily imply the existence of such a quadratic Lyapunov function, but it is equivalent to the existence of  a polynomial homogeneous higher order Lyapunov function\cite{ahmadi2017sum,mason2006common}. 
 
 Stability and control of switching systems have been widely studied \cite{liberzon2003switching}. In \cite{zelentsovsky1994nonquadratic}, polynomial, homogeneous Lyapunov functions for uncertain systems is shown to be equivalent to a quadratic Lyapunov function for a transformed system. Authors in \cite{abate2020lyapunov}, generalized this idea and defined a hierarchy of dynamical systems using a lifting procedure such that the quadratic Lyapunov functions for any system in the hierarchy can be used as a polynomial homogeneous Lyapunov function for the original system. Then, the search for homogeneous Lyapunov function for LTV systems can be recast as the search for a quadratic Lyapunov function in a related hierarchy. In this paper, we extend the contents of \cite{abate2020lyapunov} and add more flexibility to the methods described therein to  consider polynomial, but not necessarily homogeneous  Lyapunov functions as well. This extension is powerful in producing better analysis metrics for LTV systems such as less conservative outer approximation for the reachable set of states given the initial condition than the results introduced in \cite{abate2020lyapunov}.  
 An alternative approach  in computing polynomial Lyapunov functions and various performance and robustness guarantees for switching linear system is Sum-of-Squares (SOS) optimization \cite{parrilo2003semidefinite} in control systems \cite{jarvis2005control}. SOS techniques cast the search for polynomial Lyapunov  or Lyapunov-like functions as a convex feasibility problem for which many solvers were developed to solve \cite{parrilo2000structured}. 
 
 Lyapunov-like functions are used to capture point-wise-in-time metrics for LTI systems such as peak norms. But, these metrics are captured with high conservatism \cite{abedor1996linear, blanchini1995nonquadratic}. However,  polynomial homogeneous Lyapunov-like functions are utilized to reduce this conservatism and get more accurate upper bounds \cite{abate2021pointwise,abdelraouf2022computing}. In this work, we use polynomial non-homogeneous Lyapunov-like functions to get more accurate upper bounds for the peak of the impulse response of LTV systems, moreover, these non-homogeneous polynomials are used to generate a worst case trajectory to accurately bound the peak norm from below as well. 
The work in this paper is motivated by the fact that someone with basic knowledge about linear systems theory and convex optimization tools can easily build the system hierarchy and produce useful performance analysis for LTV systems. Then, the  contribution can be summarized by (1) introducing a new hierarchy of LTV systems where the quadratic Lyapunov function for the lifted system in any level can be used as a non-homogeneous Lyapunov function for the original system. (2) Using these non-homogeneous Lyapunov functions to get better approximations for the reachable sets compared to quadratic functions in \cite{boyd1994linear} and homogeneous polynomials in \cite{abate2020lyapunov}. (3) Using the hierarchy of LTV systems to obtain non-homogeneous polynomial invariant sets that are not necessarily Lyapunov functions.  
 
\section{Notations}

Denote the set of real numbers by $\R$ and the set of non-negative real numbers by $\R^+$. Denote by $S_{++}^n\subset \R^{n\times n}$ the set of symmetric positive definite $n \times n$ matrices. For $P \in \R^{n\times n}$, $P \succ 0$ means that $P \in \S_{++}^{n \times n}$ such that the quadratic form $V(x)=x^T P x$ is positive for all nonzero $x\in \R^n$. The zero vector in $\R^n$ is denoted by $0_n \in \R^n$ and the identity $n \times n$ matrix is denoted by $I_n$. The convex hull of the set $\mathcal{M} \subset \R^{n\times n}$ is denoted by $\text{conv}(\mathcal{M}) \subset \R^{n \times n}$. The vectorization of a matrix $P\in \R^{n \times n}$, with $n$ columns denoted by $P_{\Bigcdot 1}, \dots, P_{\Bigcdot n} $ , is $\vec({P}) \in \R^{n^2}$ such that $\text{vec}(P)=\begin{bmatrix} P_{\Bigcdot 1}^T & \dots &P_{\Bigcdot n}^T\end{bmatrix}^T$. For a set of matrices $A_i \in \R^{n_i\times n_i }$ for $i=1,\dots,p$, we define $\text{diag}(A_1,A_2, \dots,A_p)$ as 
\begin{equation}
    \text{diag}(A_1,A_2, \dots A_p)= \left[ \begin{array}{cccc} 
{A}_1 & 0 &\ldots & 0 \\
0 & {A}_2 & \ddots & \vdots \\
\vdots & \ddots & \ddots & 0 \\
0 & \ldots & 0 & {A}_p \\
\end{array}\right].
\end{equation}
 The \textit{Kronecker product} of $A\in \R^{n \times m}$ and $B \in \R^{p\times q}$ is denoted by $A \otimes B \in \R^{n p \times mq}$ which is defined as
\begin{equation}
    A \otimes B := 
    \begin{bmatrix}
    a_{11} B & \cdots & a_{1m} B\\
    \vdots & \ddots & \vdots\\
    a_{n1} B & \cdots & a_{nm} B
    \end{bmatrix},
\end{equation}
where $a_{ij}$ is the $(i,j)$-th entry of $A$ for $i=1,\dots,n$ and $j=1,\dots,m$. $(A\otimes B)(C \otimes D)=AC \otimes BD$ for $A,B,C$ and $D$ with proper dimensions is an important property of Kronecker product that is used in this note. Moreover, $(A \otimes B)^T= A^T \otimes B^T$. For an integer $i\geq2$, the $i^{th}$ Kronecker power of a matrix $A \in \R^{n \times m}$ is denoted by $\otimes^i A$ such that $\otimes ^1 A =A$ and $\otimes^{k} A =A \otimes (\otimes^{k-1}A)$ for all $k\ge 2$.

\section{stability of Linear-Time Varying systems}

Consider the linear time-varying system 
\begin{equation}
\dot{x}=A(t)x,\;\; x(0) = x_0,
\label{ LTV system}
\end{equation}
where $x \in \R^{n}$ is the system state vector and $A(t)$ evolves inside the set $\text{conv}(\mathcal{M}) \subset \R^{n\times n}$ for all $t \geq 0$ and $\mathcal{M}=\{A_1,A_2,\dots,A_N\}$. We assume that each mode $\dot{x}=A_i x$ for $i=1,\dots, N$ is asymptotically stable, i.e. the system states converge asymptotically to the origin starting from any initial condition $x(0)$. It is shown in \cite{liberzon2003switching} that the asymptotic stability of each switching mode does not necessarily imply the stability of the system (\ref{ LTV system}) under arbitrary switching. 
\begin{remark}
    System (\ref{ LTV system}) defines a class of LTV systems that are called linear switched systems under arbitrary switching as in \cite{lin2009stability} or linear differential inclusions as in \cite{boyd1994linear}. Moreover, the time varying polytopic systems defined in \cite[chapter 3]{chesi2009homogeneous} can be written in form (\ref{ LTV system}). For example, consider the system 
    \begin{equation*}
    \dot{x}=A_0+p_1(t)A_1+p_2(t)A_2 
    \end{equation*}
    where $A_0,A_1,A_2 \in \R^{n\times n}$ are given matrices and the uncertain parameters $p_1(t),p_2(t) \in [-1,1]$. This system can be written as a LTV system in the form (\ref{ LTV system}) for $A(t) \in \text{conv}\left({\mathcal{M}}\right)$, where
    \begin{align*}
        \mathcal{M}=\{&A_0 + A_1 +A_2 ,\\ &A_0+A_1-A_2,\\
        &A_0-A_1+A_2,\\
        &A_0-A_1-A_2\}.
    \end{align*}
\end{remark}

 In the context of Lyapunov stability theory, the system (\ref{ LTV system}) is globally asymptotically stable if there exists a radially unbounded function $V:\R^n \to \R$ such that 
 \begin{equation}
 \begin{aligned}
 V(0)&=0\\
 V(x)&>0 \; \;  \text{ for all nonzero } x  \\ 
 \dot{V}(x)&= \langle \frac{\partial V}{\partial x}, A_j x \rangle < 0 \; \; \text{for all } j\in \{1,2, \dots, N\}.
 \end{aligned}
 \label{Lyap condtions} 
 \end{equation}
 The system (\ref{ LTV system}) is said to be quadratically stable when the Lyapunov function is quadratic in $x$ such that $V(x)=x^T P x$ where $P \in \S_{++}^n$. Then,  the negative gradient condition in (\ref{Lyap condtions}) becomes
 \begin{equation}
 PA_j+A_j^T P \prec 0  \; \; \text{for all } j\in \{1,2, \dots, N\}. 
 \label{Lyap quadratic conditions}
 \end{equation}
 The search for quadratic Lyapunov functions for (\ref{ LTV system}) has some computational advantages compared to other stability analysis methods \cite{vandenberghe1993polynomial}. The Linear Matrix Inequality (LMI) (\ref{Lyap quadratic conditions}) can be solved efficiently by many available semi-definite program solvers. If the system (\ref{ LTV system}) is time invariant i.e. $N=1$, then the stability is equivalent to the existence of a quadratic Lyapunov function, however, in the general setting $N\geq 2$, the system (\ref{ LTV system}) can be stable, yet no quadratic Lyapunov function certifies its stability. This motivates the search for higher order polynomial Lyapunov functions. In \cite{abate2020lyapunov}, the authors provided tools to compute higher order, polynomial and homogeneous Lyapunov functions for (\ref{ LTV system}). The \textit{Lyapunov differential equation} for system (\ref{ LTV system}) is 
 \begin{equation}
 \dot{X}=A(t)X+XA(t)^{T},
 \label{Lyap diff eq}
 \end{equation}
 where $X \in \R^{n \times n}$. The differential equation (\ref{Lyap diff eq}) can be written as 
 \begin{equation}
 \dot{\vv{X}}=\A(t) \vv{X}
 \label{vect lyap equation}
 \end{equation}
 where $\vv{X}=\text{vec}(X)$. In this case,  $\A(t) \in \R^{n^2 \times n^2}$ evolves in $\text{conv}(\mathcal{M})$ with $\mathcal{M}=\{\A_1,\A_2,\dots,\A_N\}$ and
 \begin{equation}
 \A_j=I_n \otimes A_j+A_j \otimes I_n \; \; \text{for all } j=1,\dots,N. 
 \label{lyap eq kron}
 \end{equation}
 From several prior works, it is shown that the system (\ref{vect lyap equation}) is stable if and only if system (\ref{ LTV system}) stable. Moreover, the formula (\ref{lyap eq kron}) can be generalized to produce a hierarchy of dynamical systems with higher dimensions. Indeed, the system (\ref{ LTV system}) is considered to be the first system, denoted $H_1$, in the hierarchy obtained by the recursion 
 \begin{equation}
  H_i:  \left\{
\begin{array}{rcl} 
\dot{\xi}_i &=& \A^i(t)\xi_i\\
\A^i(t) &\in& \text{conv}(\mathcal{M}_i) \\
\mathcal{M}_i &=&  \{\A_1^i,\, \cdots,\, \A_N^i\}\\
\A^i_j &=& I_n \otimes \A_j^{i-1} + A_j\otimes I_{n^{i-1}}
\end{array}\right.
\label{systems Hierarcy}
 \end{equation}
where, $i\geq2$, $\xi_1=x$ and $\xi_i=\otimes^i x$, so $\xi_i \in \R^{n^i}$. In \cite{abate2020lyapunov}, it is shown that the existence of a quadratic Lyapunov function for any system $H_i$ in the hierarchy (\ref{systems Hierarcy}) implies the existence of a polynomial homogeneous Lyapunov function of order $2i$ for the system (\ref{ LTV system}). This leads the authors to develop a series of applications of such homogeneous Lyapunov functions for the analysis of LTV systems such as  increasingly tight approximations of time domain input-output stability metrics and worst case system time response.  In the next section, we complement this work by introducing a new hierarchy of LTV systems such that a quadratic Lyapunov function for any system in the hierarchy can be considered as a non-homogeneous polynomial Lyapunov function for the original system. 

 \section{A new hierarchy of dynamical systems and corresponding quadratic Lyapunov functions}
\label{new hierarchy}
 
 Motivated by the search for polynomial Lyapunov and Lyapunov-like functions for system (\ref{ LTV system}) that are not necessarily homogeneous, we introduce a new hierarchy of dynamical systems where system (\ref{ LTV system}) is the first system in the hierarchy and for $i\ge 2$,  
 
 \begin{equation}
  \tilde{H}_i:  \left\{
\begin{array}{rcl} 
\dot{\tilde\xi}_i &=& \tilde \A^i(t)\tilde \xi_i\\
\tilde \A^i(t) &\in& \text{conv}( \tilde{ \mathcal{M}}_i) \\
\tilde{\mathcal{M}}_i &=&  \{\tilde \A_1^i,\, \cdots,\, \tilde \A_N^i\}\\
\tilde \A^i_j &=& \text{diag}(\A^1_j,\A^2_j,\dots,\A^i_j) 
\end{array}\right.
\label{ new systems Hierarcy}
 \end{equation}
where $\A^k_j = I_n \otimes \A_j^{k-1} + A_j\otimes I_{n^{k-1}} \; \; \text{for all } k=2,\dots i$ and  $\A_j^1 =A_j$ for all $j=1,\dots,N$.  From the hierarchy (\ref{systems Hierarcy}), $\xi_i=\otimes^i  x$ for $i\geq 2$ such that $\xi_1=x$. Then, in this case,  $\tilde{\xi}_i=\left[\xi_1^T \; \xi_2^T \; \ldots \; \xi_i^T  \right]^T$ with dimension 
\begin{equation*}
\tilde{n}_i=n+n^2+\dots+n^i = \frac{n(n^i-1)}{(n-1)}.
\end{equation*}
In essence, the systems defined by (\ref{ new systems Hierarcy}) are obtained by forming parallel concatenations of the systems defined by (\ref{systems Hierarcy}).  We then  consider the quadratic Lyapunov and Lyapunov-like functions for the system $\tilde{H}_i$ which is called the lifted system of degree $i$.
\begin{theorem}
    If system (\ref{ LTV system}) is quadratically stable, then for every $i\geq2$, there exists a quadratic Lyapunov function which proves the stability of $\tilde{H}_i$ and if $P_1$ satisfies (\ref{Lyap quadratic conditions}), then $\mathcal{P}_i=\text{diag}(P_1,\otimes^2 P_1,\dots, \otimes^i P_1 )$ satisfies
    \begin{equation}
        \mathcal{P}_i \tilde{\A}^i_j+ (\tilde{\A}^i_j)^T \mathcal{P}_i \preceq 0 \; \; \text{for all } j=1,\dots,N 
        \label{quad stability for H_i}
    \end{equation}
    for the system $\tilde{H}_i$.
\end{theorem}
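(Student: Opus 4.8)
The plan is to exploit the block-diagonal structure of both $\mathcal{P}_i=\text{diag}(P_1,\otimes^2 P_1,\dots,\otimes^i P_1)$ and $\tilde{\A}^i_j=\text{diag}(\A^1_j,\dots,\A^i_j)$. Products and sums of block-diagonal matrices remain block-diagonal, so the matrix on the left of (\ref{quad stability for H_i}) is block-diagonal with $k$-th diagonal block $R^k_j:=(\otimes^k P_1)\A^k_j+(\A^k_j)^T(\otimes^k P_1)$ for $k=1,\dots,i$. A symmetric block-diagonal matrix is negative (semi)definite precisely when each block is, so it suffices to show $R^k_j\preceq 0$ for every $k$ and every $j$. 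Observe that the first block $R^1_j=P_1A_j+A_j^TP_1\prec 0$ is exactly the quadratic stability hypothesis (\ref{Lyap quadratic conditions}).

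First I would establish, by induction on $k$, the recursion
\[
R^k_j = P_1\otimes R^{k-1}_j + Q_j\otimes(\otimes^{k-1}P_1),\qquad Q_j:=P_1A_j+A_j^TP_1 .
\]
This uses only the defining recursion $\A^k_j=I_n\otimes\A^{k-1}_j+A_j\otimes I_{n^{k-1}}$ from (\ref{ new systems Hierarcy}), the factorization $\otimes^k P_1=P_1\otimes(\otimes^{k-1}P_1)$, the mixed-product identity $(A\otimes B)(C\otimes D)=AC\otimes BD$, and $(A\otimes B)^T=A^T\otimes B^T$. Expanding $(\otimes^k P_1)\A^k_j$ and $(\A^k_j)^T(\otimes^k P_1)$ and collecting terms, the two terms whose first Kronecker factor is $P_1$ combine into $P_1\otimes R^{k-1}_j$, while the remaining two assemble into $(P_1A_j+A_j^TP_1)\otimes(\otimes^{k-1}P_1)=Q_j\otimes(\otimes^{k-1}P_1)$. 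Keeping straight which Kronecker factor carries $P_1$ versus $A_j$ is the step most prone to error, and I expect this bookkeeping to be the main obstacle; the rest is routine.

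Then I would close the induction using the sign behaviour of Kronecker products of symmetric matrices: the eigenvalues of $S\otimes T$ are the pairwise products of the eigenvalues of $S$ and $T$, so $S\succ 0$ together with $T\prec 0$ gives $S\otimes T\prec 0$. Since $P_1\succ 0$ implies $\otimes^{k-1}P_1\succ 0$ (a Kronecker power of a positive definite matrix), the summand $Q_j\otimes(\otimes^{k-1}P_1)$ is negative definite, and by the inductive hypothesis $R^{k-1}_j\prec 0$ the summand $P_1\otimes R^{k-1}_j$ is negative definite as well. Hence $R^k_j\prec 0$, which in particular yields $R^k_j\preceq 0$. Reassembling the blocks shows $\mathcal{P}_i\succ 0$ (each block $\otimes^k P_1$ is positive definite) and establishes (\ref{quad stability for H_i}), so $V(\tilde\xi_i)=\tilde\xi_i^T\mathcal{P}_i\tilde\xi_i$ is the claimed quadratic Lyapunov function for $\tilde H_i$. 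I would remark that the argument in fact delivers the strict inequality $R^k_j\prec 0$, which is stronger than the stated $\preceq 0$ and confirms that $\tilde H_i$ is quadratically stable whenever (\ref{ LTV system}) is.
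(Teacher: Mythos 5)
Your proposal is correct and follows essentially the same route as the paper: the same induction on $k$ via the recursion $R^k_j = P_1\otimes R^{k-1}_j + (P_1A_j+A_j^TP_1)\otimes(\otimes^{k-1}P_1)$, closed with the same Kronecker eigenvalue-product sign argument. The only differences are cosmetic---you keep the index $j$ explicit where the paper reduces WLOG to a single matrix, you state the block-diagonal reassembly more carefully than the paper's ``sum'' notation, and you note the strict inequality $\prec 0$, all of which are fine.
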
 
\begin{proof}
Without loss of generality, let $\mathcal{M}_1 = \{A\}$. First, we prove that every subsystem $\dot{\xi}_k=\A_k \xi_k$ for all $k=1,\dots, i$ in $\tilde{H}_i$ is quadratically stable  such that $P_k \A^k + (\A^k)^T P_k \preceq 0$ and $P_k=\otimes^k P_1$. For $k=1$, it is given that the subsystem $\dot{x}=Ax$ is quadratically stable such that $P_1A+A^T P_1 \preceq 0 $. Then for $k \ge 2 $,
\begin{align*}
    P_k \A^k &= (P_1 \otimes P_{k-1})(I_n \otimes \A^{k-1} + A \otimes I_{n^{k-1}})\\
    &= P_1 \otimes P_{k-1} \A^{k-1}+P_1A \otimes P_{k-1}, \\
    (\A^k)^T P_k &= (I_n \otimes \A^{k-1} + A \otimes I_{n^{k-1}})^T(P_1 \otimes P_{k-1})\\
    &= P_1 \otimes (\A^{k-1})^T P_{k-1}+ A^T P_1 \otimes P_{k-1}.
\end{align*}
Therefore, 
\begin{equation}
    \begin{aligned}
    P_k \A^k &+ (\A^k)^T P_k = \\
    & P_1 \otimes (P_{k-1} \A^{k-1}+(\A^{k-1})^T P_{k-1})\\
    &+ (P_1 A + A^T P_1) \otimes P_{k-1}
\end{aligned}
\label{lyap for subsystem}
\end{equation}
From Kroncker product properties, for two matrices $L \in \R^{l\times l}$ with eigenvalues $\lambda_p, p =1,\dots,l$ and $M \in \R^{m\times m}$, with eigenvalues $\mu_q, q =1,\dots,m$, the eigenvalues of $L\otimes M$ is $\lambda_p \mu_q \; \text{for } p =1,\dots,l \text{ and } q =1,\dots,m$. Hence, if $L$ is positive-(semi)definite and $M$ is negative-(semi)definite, then $L\otimes M$ is negative-(semi)definite. By applying this property in (\ref{lyap for subsystem}), $(P_1 A + A^T P_1) \otimes P_{k-1} \preceq 0$ since $P_{k-1} \succ 0$. Also, By induction, the term $ P_1 \otimes (P_{k-1} \A^{k-1}+(\A^{k-1})^T P_{k-1}) \preceq 0$. Therefore,  $P_k \A^k + (\A^k)^T P_k \preceq 0$, which proves that the subsystem $\dot{\xi}_k=\A^k \xi_k$ for all $k=2,\dots, i $ is quadratically stable. Then,
\begin{equation*}
    \mathcal{P}_i \tilde{A}^i+ (\tilde{A}^i)^T \mathcal{P}_i = \sum_{k=1}^{i} P_k \A^k + (\A^k)^T P_k \preceq 0 
\end{equation*}
Hence the system $\tilde{H}_i$ is quadratically stable with $\mathcal{P}_i=\text{diag}(P_1,\otimes^2 P_1,\dots, \otimes^i P_1 )$ that satisfies (\ref{quad stability for H_i}). 
\end{proof}

 Stability of (\ref{ LTV system}) can be certified using the hierarchy (\ref{ new systems Hierarcy}) by allowing the early symmetric blocks of $\mathcal{P}_i$ to be zeros. Moreover, the value of relying on common Lyapunov and Lyapunov-like functions becomes evident once additional information is demanded from the system (\ref{ LTV system}). Consider for example, capturing the reachable set of system (\ref{ LTV system}) from a given initial condition $x_0$. Since the time varying system is nondeterministic, many possible trajectories can be generated from a given initial condition. So, we are interested in bounding the set of all possible trajectories. Based on the results of \cite{abate2020lyapunov}, approximating such a reachable set using quadratic or homogeneous polynomial Lyapunov functions is conservative. So,  hierarchy (\ref{ new systems Hierarcy}) can be used to find a polynomial and non-homogeneous approximation by solving the following semi-definite program for $i\geq1$
\begin{equation}
\begin{aligned}
    &\text{minimize } \tilde{\xi}_{i,0}^T \mathcal{P}_i \tilde{\xi}_{i,0}, \\
    &\text{subject to (\ref{quad stability for H_i}) and } \mathcal{P}_i \succeq 0
    \label{reachable set optimization}
\end{aligned}
\end{equation}
where $\tilde{\xi}_{i,0}=\left[x_0^T \; (\otimes^2 x_0)^T \; \dots \;(\otimes^i x_0)^T \right]^T $. Then, the reachable set from a given initial condition $x_0$ is 
\begin{equation*}
    \X_i = \{x: {\xi}_{i}^T \mathcal{P}_i {\xi}_{i}\leq \tilde{\xi}_{i,0}^T \mathcal{P}_i \tilde{\xi}_{i,0}\}. 
\end{equation*}
where $\tilde{\xi}_i=\left[ x^T \; (\otimes^2 x)^T\; \dots \; (\otimes^i x)^T  \right]^T$. The following example, originally used in \cite{abate2020lyapunov}, illustrates the power of this approach. 
\begin{example}
Consider the linear time varying system (\ref{ LTV system}) with $\mathcal{M}=\{A_1,A_2\}$ and
\begin{equation}
A_1 = \begin{bmatrix}
    -0.5 & 0.5 \\ -0.5 &-0.5
\end{bmatrix}, \; \; A_2 =\begin{bmatrix}
    -2.5 &2.5 \\ -2.5 & 1.5
\end{bmatrix}.
\label{LTV system parameters}
\end{equation}
\end{example}
The system is known to be quadratically stable. Consider the initial condtion $x_0= \left[1\; 0\right ]^T $. The semi-definite program (\ref{reachable set optimization}) is computed using SDPT3 solver supported by CVX \cite{grant2014cvx}. The approximation of the set of reachable states starting from $x_0$ for $i=2,4, \text{and } 6$ are plotted in Figure \ref{fig:1}.
\begin{figure}[H]
    \centering
    \includegraphics[scale=0.38]{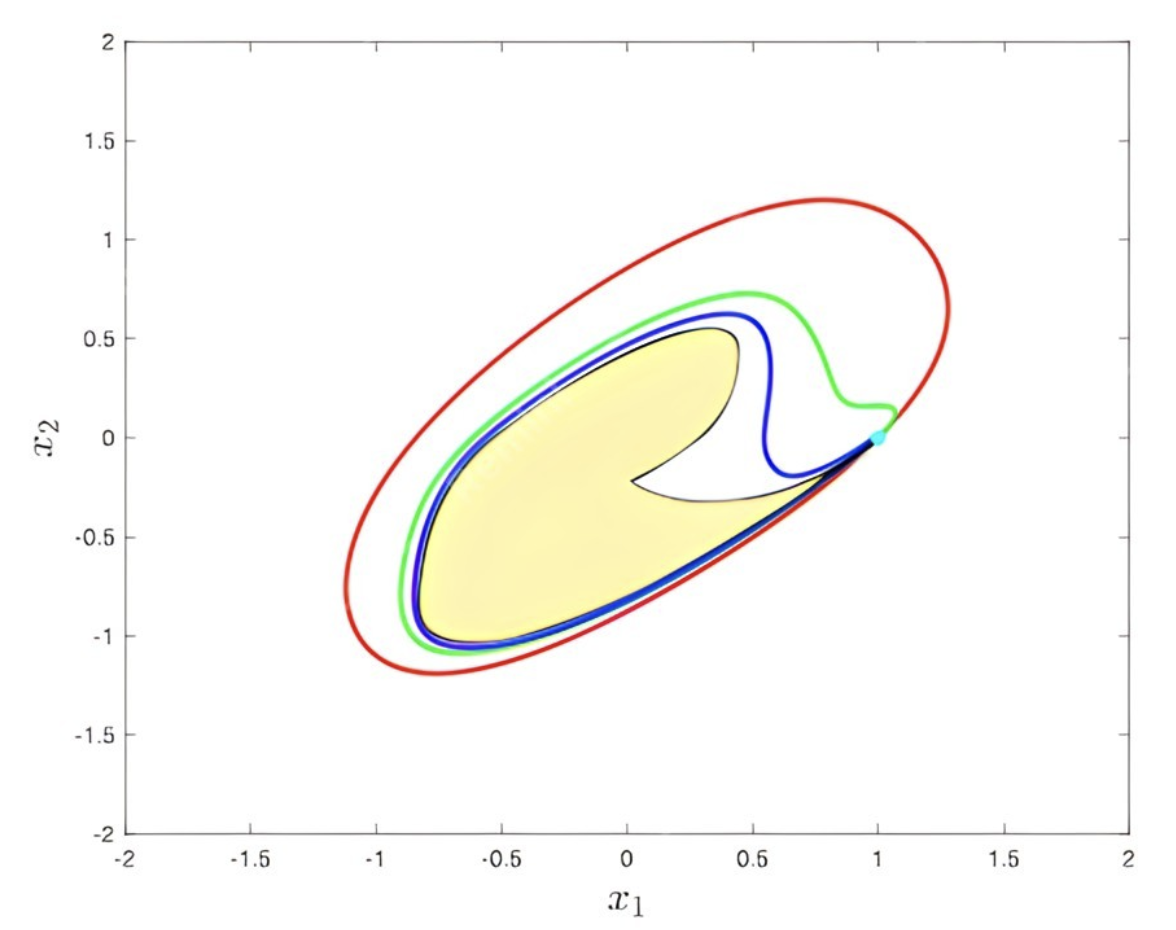}
    \caption{Simulated response of system (\ref{ LTV system}) with parameters (\ref{LTV system parameters}). The light yellow region represents the set of reachable states from $x_0$, the cyan point. The red, green and blue regions represent the approximation of the reachable set using $4^{\text{th}}$,  $8^{\text{th}}$ and $12^{\text{th}}$ order non-homogeneous polynomials computed by solving (\ref{reachable set optimization}) respectively.}  
    \label{fig:1}
\end{figure}

\begin{remark}
    The reachable sets $\mathcal{X}_i$ produced by solving (\ref{reachable set optimization}) are invariant which means that for any trajectory $x$ which starts at $x_0 \in \X_i$ under the dynamics (\ref{ LTV system}), $x(t)\in \X_i$ for all $t>0$. 
\end{remark}

From figure 1, the set of reachable states from a given initial condition is not symmetric around the origin, so finding the outer approximation for this set by means of non-homogeneous polynomials gives more accurate approximations than homogeneous polynomials that are symmetric around the origin. That makes the approximations presented here far better than those introduced in \cite{abate2020lyapunov}. Motivated by the excellent performance obtained when considering example 1, we then examine the kind of reachable set approximation that may be obtained for a LTI system running from a given initial condition. That can be illustrated by the  following example.
\begin{example}
Consider the linear time invariant (LTI) system 
\begin{equation}
\dot{x}=\begin{bmatrix}
0 & 1 \\ -2 &-1
\end{bmatrix}x, \; \;  x_0 = \begin{bmatrix} 1 \\ 0 \end{bmatrix}.
\label{LTI system}
\end{equation}
\end{example}
The semi-definite program (\ref{reachable set optimization}) is solved to bound the system trajectory by a non-homogeneous polynomial for $i=2,3, \text{and } 4$ that generates  4-th, 6-th and 8-th non-homogenuous order polynomials respectively. Figure 2 illustrates that increasing the order of the non-homogeneous polynomial produces increasingly accurate invariant polynomial outer approximations for that system's specific trajectory.  
\begin{figure}[H]
    \centering
    \includegraphics[scale=0.21]{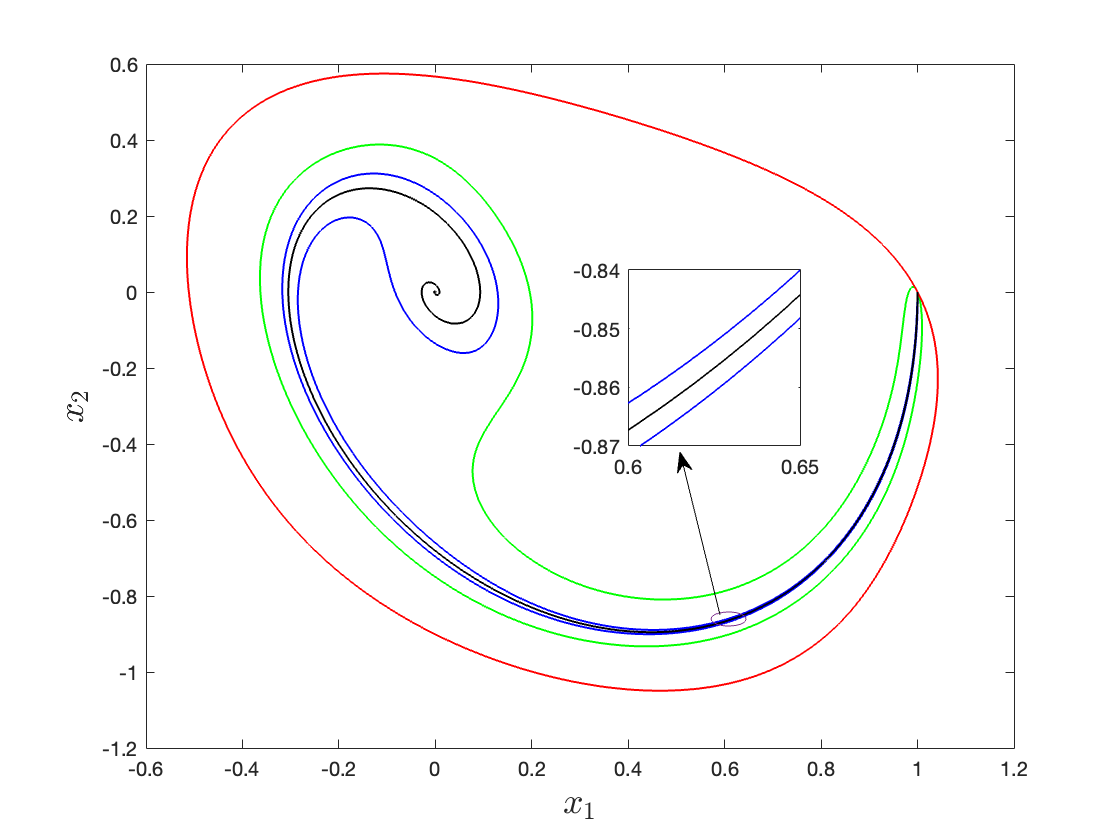}
    \caption{The black line represents the  simulated response of system (\ref{LTI system}) . The red, green and blue regions represent the approximation of the reachable set using $4^{\text{th}}$,  $6^{\text{th}}$ and $8^{\text{th}}$ non-homgenuous polynomials computed by solving (\ref{reachable set optimization}) respectively.} 
    \label{fig:2}
\end{figure}
In the next section, the introduced hierarchy of non-homogeneous Lyapunov function is used to compute less conservative bounds for the impulse response of both LTI systems and LTV  systems. 
\section{Analysis of impulse response via non-homogeneous Lyapunov functions}
In this section, we consider the single-input, single-output LTV system 
\begin{equation}
    \dot{x}=A(t)x + bu, \; \; y = cx, 
    \label{LTV system input}
\end{equation}
where $u$ and $y$ are the system's input and output respectively. The impulse response $h(t)$ of the system (\ref{LTV system input}) is given by the output of the system 
\begin{equation}
\dot{z}(t)=A(t)z(t), \; \;  h(t)= c z(t) ,
\label{impulse system}
\end{equation}
with initial condition $z(0)=b$. The objective is reducing the conservatism of the bounds on the peak of the impulse response for system (\ref{LTV system input}) generated using quadratic Lyapunov functions. This conservatism has been quantified in [1, page 98] or \cite{feronThesis}. The augmented system (\ref{impulse system}) allows us to take advantage of the hierarchy (\ref{ new systems Hierarcy}). Then, system (\ref{impulse system}) can be lifted to level $i\geq 2$ as 
\begin{equation}
\dot{\tilde{\xi}}_i=\tilde{\A}^i(t)\tilde{\xi}_i, \; \;  {\textbf{h}}_i(t) = \tilde{\bm{c}}_i \tilde{\xi}_i
\label{lifted impulse response}
\end{equation}
where $\tilde{\xi}_i(0)=\tilde{\bm{b}}_i=\left[b^T\;  (\otimes^2 b)^T\; \dots \; (\otimes^i b)^T\right]^T$. and $\tilde{\bm{c}}_i= \left[ c \; \otimes^2c \; \dots \otimes^i c\right]$. In this case, $\textbf{h}_i(t)=h(t)+h(t)^2+\dots+h(t)^i$. 
\begin{theorem}
   The impulse response of (\ref{LTV system input}) satisfies $|h(t)| \leq \bar{{h}}_i$, where $\bar{h}_i$ is the unique real positive root of the polynomial $p^i+p^{i-1}+\dots+p-\bar{\textbf{h}}_i$ where 
   \begin{equation}
\bar{\bm{h}}_i=\sqrt{\tilde{\bm{c}}_i \mathcal{P}_i \tilde{\bm{c}}_i^T}   \sqrt{\tilde{\bm{b}}_i^T \mathcal{P}_i \tilde{\bm{b}}_i}
\label{h bar equation}
   \end{equation}
   for any $\mathcal{P}_i$ satisfies (\ref{quad stability for H_i}). 
\end{theorem}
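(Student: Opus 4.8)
The plan is to use $V(\tilde\xi_i)=\tilde\xi_i^T\mathcal{P}_i\tilde\xi_i$ as a Lyapunov function for the lifted impulse system (\ref{lifted impulse response}) and to read off the peak of $h$ from the sublevel set that traps its trajectory. First I would observe that, since $\tilde{\A}^i(t)\in\mathrm{conv}(\tilde{\mathcal{M}}_i)$ and $\mathcal{P}_i$ satisfies (\ref{quad stability for H_i}) at every vertex $\tilde{\A}^i_j$, the matrix inequality is preserved under convex combinations, so $\dot V=\tilde\xi_i^T\big(\mathcal{P}_i\tilde{\A}^i(t)+\tilde{\A}^i(t)^T\mathcal{P}_i\big)\tilde\xi_i\le 0$ along every trajectory. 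Thus $V$ is nonincreasing and, using $\tilde\xi_i(0)=\tilde{\bm{b}}_i$, every trajectory satisfies $\tilde\xi_i(t)^T\mathcal{P}_i\tilde\xi_i(t)\le \tilde{\bm{b}}_i^T\mathcal{P}_i\tilde{\bm{b}}_i$ for all $t\ge0$; i.e. $\tilde\xi_i(t)$ is confined to a fixed $\mathcal{P}_i$-ellipsoid.

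Second, I would bound the lifted output $\bm{h}_i(t)=\tilde{\bm{c}}_i\tilde\xi_i(t)$ by the support function of this ellipsoid. Assuming $\mathcal{P}_i\succ0$ (and otherwise replacing the inverse by a pseudo-inverse on $\mathrm{range}(\mathcal{P}_i)$), writing $\bm{h}_i(t)=(\mathcal{P}_i^{-1/2}\tilde{\bm{c}}_i^T)^T(\mathcal{P}_i^{1/2}\tilde\xi_i(t))$ and applying Cauchy--Schwarz gives $|\bm{h}_i(t)|\le\sqrt{\tilde{\bm{c}}_i\mathcal{P}_i^{-1}\tilde{\bm{c}}_i^T}\,\sqrt{\tilde\xi_i(t)^T\mathcal{P}_i\tilde\xi_i(t)}$. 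Combining with the invariance bound of the first step produces $|\bm{h}_i(t)|\le\bar{\bm{h}}_i$, where $\bar{\bm{h}}_i$ is the quantity (\ref{h bar equation}); I note that the ellipsoid support function forces an inverse in the $\tilde{\bm{c}}_i$ factor, which I would reconcile with the stated form of (\ref{h bar equation}).

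Third, I would collapse the lifted output to a scalar polynomial in $h$. By the Kronecker mixed-product identity $(\otimes^k c)(\otimes^k z)=(cz)^k$, and since the $k$-th block of $\tilde\xi_i(t)$ is $\otimes^k z(t)$, one has $\bm{h}_i(t)=\sum_{k=1}^i h(t)^k=:g(h(t))$. On $[0,\infty)$ the map $g$ is continuous, strictly increasing, with $g(0)=0$ and $g(p)\to\infty$; since $\bar{\bm{h}}_i\ge0$, the equation $g(p)=\bar{\bm{h}}_i$ has a unique positive root, which is exactly the $\bar{h}_i$ of the statement (the coefficient pattern $1,\dots,1,-\bar{\bm{h}}_i$ has one sign change, hence one positive root by Descartes' rule). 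For any $t$ with $h(t)\ge0$, the chain $g(h(t))=\bm{h}_i(t)\le\bar{\bm{h}}_i=g(\bar{h}_i)$ together with strict monotonicity gives $h(t)\le\bar{h}_i$ at once.

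The main obstacle is the sign of $h(t)$: for even $i$ the polynomial $g$ is not monotone on $(-\infty,0)$, so the two-sided estimate $|\bm{h}_i(t)|\le\bar{\bm{h}}_i$ does not by itself yield $|h(t)|\le\bar{h}_i$, since a sufficiently negative $h$ can make $g(h)$ small. I would close this by re-running steps one through three for the output matrix $-c$: its lifted impulse response is $\sum_{k=1}^i(-1)^k h^k=g(-h)$, and for the block-diagonal $\mathcal{P}_i$ of Theorem 1 the sign factors square away in $\tilde{\bm{c}}_i\mathcal{P}_i^{-1}\tilde{\bm{c}}_i^T$, so the associated bound is again $\bar{\bm{h}}_i$. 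Hence $g(-h(t))\le\bar{\bm{h}}_i=g(\bar{h}_i)$ gives $-h(t)\le\bar{h}_i$ whenever $h(t)\le0$, and together with the positive case this yields $|h(t)|\le\bar{h}_i$ for all $t$. I expect verifying this sign-invariance of $\bar{\bm{h}}_i$ (immediate for block-diagonal $\mathcal{P}_i$, requiring an extra argument for a general $\mathcal{P}_i$ satisfying (\ref{quad stability for H_i}) with nonzero off-diagonal blocks) to be the only delicate point.
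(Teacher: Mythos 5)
Your proposal takes the same overall route as the paper's own proof: invariance of the sublevel set $\{\tilde{\xi}_i:\tilde{\xi}_i^T\mathcal{P}_i\tilde{\xi}_i\le\tilde{\bm{b}}_i^T\mathcal{P}_i\tilde{\bm{b}}_i\}$, a support-function bound on the lifted output, and monotonicity of $p\mapsto p+p^2+\dots+p^i$ on $\R^+$ to extract the root. The difference is that the two points you flag as needing ``reconciliation'' are genuine defects of the paper's argument rather than of yours. On the inverse: the maximum of $\tilde{\bm{c}}_i\tilde{\xi}_i$ over that ellipsoid is $\sqrt{\tilde{\bm{c}}_i\mathcal{P}_i^{-1}\tilde{\bm{c}}_i^T}\,\sqrt{\tilde{\bm{b}}_i^T\mathcal{P}_i\tilde{\bm{b}}_i}$, exactly as your Cauchy--Schwarz step gives; the paper's equation (\ref{h bar equation}) prints $\mathcal{P}_i$ where $\mathcal{P}_i^{-1}$ must appear, and its own program (\ref{impulse response semidefinite program}), whose objective is $\tilde{\bm{c}}_iQ^{-1}\tilde{\bm{c}}_i^T$, confirms that the inverse is what is intended. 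Treat it as a typo and keep your version.

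On the sign of $h$: this is precisely where the paper's proof breaks. It passes from $|h+h^2+\dots+h^i|\le\bar{\bm{h}}_i$ to $|h|+|h|^2+\dots+|h|^i\le\bar{\bm{h}}_i$, an implication that is false in general: for $i=2$ and $h=-1$ the hypothesis holds (left side $0$) while $|h|+|h|^2=2$. Cancellation between odd and even powers of a negative $h$ is exactly the loophole you identify, and the paper does nothing to close it, so your extra step is not optional care---it is the missing ingredient. Your repair via the output $-c$ is sound when $\mathcal{P}_i$ is block diagonal (e.g.\ Theorem 1's choice): with $D=\text{diag}(-I_n,\,I_{n^2},\,\dots,\,(-1)^iI_{n^i})$ one has $D\mathcal{P}_i^{-1}D=\mathcal{P}_i^{-1}$, so $\tilde{\bm{c}}_i$ and $\tilde{\bm{c}}_iD$ produce the same $\bar{\bm{h}}_i$. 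But for a general $\mathcal{P}_i$ satisfying (\ref{quad stability for H_i})---which is what the theorem literally asserts, and the interesting case, since nonzero off-diagonal blocks are what make the polynomial non-homogeneous---the quantity $\tilde{\bm{c}}_iD\mathcal{P}_i^{-1}D\tilde{\bm{c}}_i^T$ need not equal $\tilde{\bm{c}}_i\mathcal{P}_i^{-1}\tilde{\bm{c}}_i^T$, so neither your argument nor the paper's establishes the stated claim. A clean repair exists: since $D\tilde{\A}^i_jD=\tilde{\A}^i_j$, the symmetrization $\frac{1}{2}\left(\mathcal{P}_i+D\mathcal{P}_iD\right)$ again satisfies (\ref{quad stability for H_i}) and is $D$-invariant, so running your proof with it yields a valid two-sided bound---but with $\bar{\bm{h}}_i$ computed from the symmetrized matrix, not from the original $\mathcal{P}_i$. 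In short, your ``delicate point'' is not a loose end in your write-up; it is a restriction (or an added symmetrization step) that the theorem statement itself requires.
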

\begin{proof}
    Since $\mathcal{P}_i$ satisfies (\ref{quad stability for H_i}) for the system (\ref{lifted impulse response}), the set $\X_i=\{\tilde{\xi}_i \;|\; \tilde{\xi}_i^T \mathcal{P}_i \tilde{\xi}_i\leq \tilde{\bm{b}}_i^T  \mathcal{P}_i \tilde{\bm{b}}_i\}$ is invariant. Hence, the maximum output $\bar{\bm{h}}_i$ is the solution of the optimization problem 
    \begin{equation}
    \begin{array}{rcl}
    \bar{\bm{h}}_i = &  \max\limits_{\tilde{\xi}_i} & \tilde{\bm{c}}_i \tilde{\xi}_i, \\
    &\mbox{s.t.}   &\tilde{\xi}_i^T \mathcal{P}_i \tilde{\xi}_i\leq \tilde{\bm{b}}_i^T  \mathcal{P}_i \tilde{\bm{b}}_i.
    \label{impulse response optimization}
\end{array}
    \end{equation}
By applying KKT conditions, (\ref{impulse response optimization})  can be solved by $\bar{\bm{h}}_i=\sqrt{\tilde{\bm{c}}_i \mathcal{P}_i \tilde{\bm{c}}_i^T}   \sqrt{\tilde{\bm{b}}_i^T \mathcal{P}_i \tilde{\bm{b}}_i}$. which means that $|h(t)+h(t)^2+\dots+h(t)^i|\leq \bar{\bm{h}}_i$, then, $|h(t)|+|h(t)|^2+\dots+|h(t)|^i\leq \bar{\bm{h}}_i$. Hence, $|h(t)|\leq \bar{h}_i$ where $\bar{h}_i$ is root of the polynomial $p^i+p^{i-1}+\dots+p-\bar{\bm{h}}_i$. This polynomial has only one real positive root for $i \geq1$ because $\bar{\bm{h}}_i>0$ and $p^i + p^{i-1}+\dots+p$ is an increasing function of $p$ for $p \geq 0$. Therefore, only one value of $p\in \R^+$ satisfies $p^i+p^{i-1}+\dots+p-\bar{\bm{h}}_i=0$ which completes the proof. 
\end{proof}

The results of theorem 2 can be improved by optimizing (\ref{h bar equation}) over all possible $\mathcal{P}_i$. This is done by solving the convex optimization problem     
\begin{equation}
    \begin{array}{rcl}
    \mathcal{P}_i = & \argmin\limits_{Q} & {\tilde{\bm{c}}_i}Q^{-1}{\tilde{\bm{c}}}^T \vspace{.01in}\\
    & \mbox{s.t.} & {\tilde{\bm{b}}_i}^T Q{\tilde{\bm{b}}_i} \leq 1\\
    &  & ({\tilde{\A}}^{i}_j)^T Q + Q\tilde{\A}^i_{j} \preceq 0.
    \end{array}
    \label{impulse response semidefinite program}
\end{equation}
In the following example, we apply theorem 2 for a system with stiff dynamics where the bounds of the impulse response obtained using quadratic Lyapunov functions are conservative \cite{feronThesis}.
\begin{example}
Consider the following  system with stiff dynamics, that is presented in \cite{abate2021pointwise},
\begin{equation}
\dot{x}=\begin{bmatrix} -1 & 0 \\ 0 &-100 \end{bmatrix}x+ \begin{bmatrix} 1 \\ 1 \end{bmatrix}u, \; \; y = \begin{bmatrix} 
1 &-2
\end{bmatrix}x.
\label{stiff system}
\end{equation}
\end{example}
\noindent The optimization problem (\ref{impulse response semidefinite program}) is solved for $i=1, 3, \text{and } 5$, then the resulted $\mathcal{P}_i$ is used to get an upper bound for the peak of the impulse response by utilizing theorem 2.  Figure \ref{fig:3} shows that using higher order non-homogeneous polynomial Lyapunov function significantly reduces the conservatism in the impulse response bound of system (\ref{stiff system}) compared to quadratic Lyapunov function. 
\begin{figure}[H]
    \centering
    \includegraphics[scale=0.21]{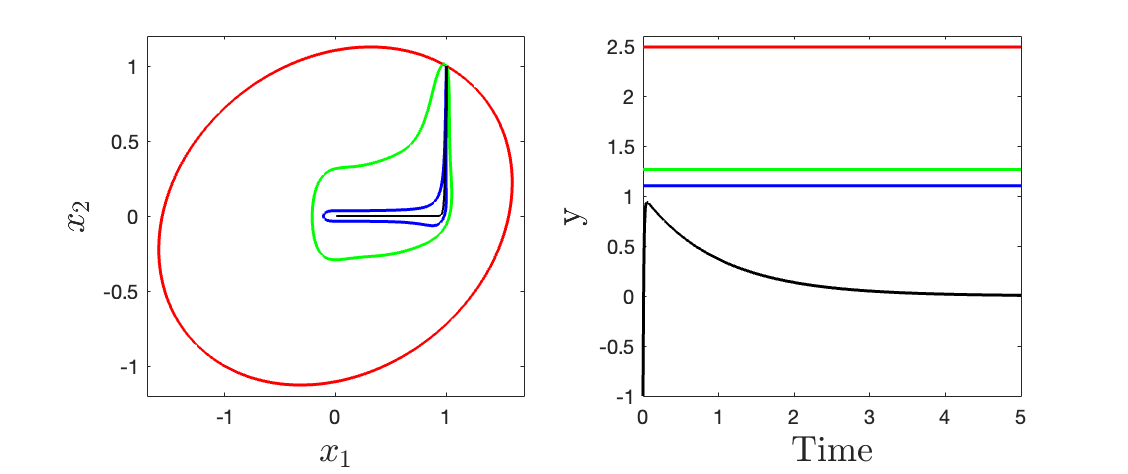}
    \caption{In the left figure , the black line represents the phase portrait of the impulse response of system (\ref{stiff system}). The red, green and blue level sets represent the invariant sets produced by solving (\ref{impulse response semidefinite program}) for $i=1,3 \text{ and } 5$ respectively. In the right figure, the black line is the impulse response of the system (\ref{stiff system}). The red, green and blue lines present the bounds obtained by theorem 2 at $i=1,3,\text{ and } 5$ respectively. } 
    \label{fig:3}
\end{figure}
In the next example, we show the effectiveness of using non-homogeneous Lyapunov functions to provide upper and lower bounds for the peak of the impulse response for uncertain systems. The upper bound is obtained by utilizing theorem 2 then Lyapunov-like function resulting from solving (\ref{impulse response semidefinite program}) is used to generate a worst case trajectory \cite{klett2021numerical}. Such a trajectory can be defined by using a Lyapunov-like function to guide the choice of $A(t)$ at every $t\ge 0$. 
\begin{definition}
    For a given continuously differentiable function $V(x)$, a worst case trajectory is defined as a trajectory $\phi(t;x_0)$ that solves (\ref{ LTV system}) with initial condition $x_0$ and $A(t)$ given by 
    \begin{equation}
A_w(t;V)= \argmax_{{A}(t) \in \textbf{conv}(\mathcal{M})} \dot{V}(\phi(t;x_0)).
\label{worst case trajectory}
\end{equation}
Intuitively, a choice of $A(t)$ that minimizes the decay rate of the Lyapunov function is "worst case". 
\end{definition}
\begin{example}
  Consider the uncertain system (\ref{LTV system input}) such that $A(t)=A+\lambda(t) \Delta$ where $\lambda \in[-1,1]$, 
  \begin{equation}
      A = \begin{bmatrix}
          0 &1 \\ -0.6 & -0.5
      \end{bmatrix}, \; \Delta=\begin{bmatrix}
          0& 0 \\ 0.1 &-0.1
      \end{bmatrix}
      \label{uncertain system parameters}
  \end{equation}
\end{example}
$b=\begin{bmatrix}
    0 & 1
\end{bmatrix}^T$ and $c=\begin{bmatrix}
    1 &0
\end{bmatrix}$. In this case, $\mathcal{M}=\{A+\Delta, A-\Delta \}$. The optimization problem (\ref{impulse response semidefinite program}) is solved for different values of $i$. Then, theorem 2 is employed to get the upper bound for  the impulse response peak. 
The resulting bounds for $i=1, 5, 10$ and $12$ are $0.9929, 0.9094, 0.8973$ and $0.8958$ respectively. 

\noindent The Lyapunov-like function $V(x)=\tilde{\xi}_i \mathcal{P}_i \tilde{\xi}_i$ is used to generate a worst case trajectory by solving (\ref{worst case trajectory}). Hence $A_w(t)=A+\lambda_{\text{max}}(t)\Delta$ such that $ \lambda_{\text{max}}(t)$ solves 
\begin{equation*}
\begin{aligned}
    \argmax_{\lambda} {\tilde{\xi}_i(t)}^T  \left[ \mathcal{P}_i (\tilde{\A}^i + \lambda \tilde{\D}^i) +  (\tilde{\A}^i + \lambda \tilde{\D}^i)^T    \mathcal{P}_i  \right] \tilde{\xi}_i(t)
\end{aligned}
    \label{lambda max optimization}
\end{equation*}
where $\tilde{\D}^i = \text{diag}(\D^1,\D^2,\dots, \D^i)$ such that $\D^1 = \Delta$ and  $\D^k =I_n \otimes \D^{k-1} + \Delta \otimes I_{n^{k-1}} $ for $k=2,\dots,i$. Therefore, $\lambda_{\max}(t)=\text{sign} ({\tilde{\xi}_i(t)}^T \left(\mathcal{P}_i \tilde{\D}_i+\tilde{\D}_i^T \mathcal{P}_i\right)\tilde{\xi}_i(t))$. The lower bound for the impulse response peak is $\max_t y_w(t)$ where $y_w$ is the output of the generated worst case trajectory. For $i=12$, the lower bound is $0.8901$ 
where the upper bound is $0.8958$ which shows the effectiveness of non-homogeneous Lyapunov function in introducing tight bounds for the peak of the impulse response of LTV systems.  

\section{Extension to invariant sets computation}
In this section, we extend our results to compute polynomial invariant sets that are not Lyapunov functions. We can use the following example to show that not every invariant set is a Lyapunov function level set and can be characterized via  $\mathcal{S}$ procedure. Consider the simple system 
\begin{equation}
\dot{x}=\begin{bmatrix} 
-1 & 0 \\0 & -1
\end{bmatrix} x 
\label{simple example invariant set}
\end{equation}
We define the set
\begin{equation*}
    S(x)=\{x \in \R^2: (x-x_0)^T(x-x_0)=3\}, \text{with } x_0=\left[ 1 \; \; 1\right]^T.
\end{equation*}
First, it is clear that $V(x)=(x-x_0)^T(x-x_0)$ is not a Lyapunov function for (\ref{simple example invariant set}) since 
\begin{align*}
    \dot{V}(x) & = -2(x-x_0)^T x\\
               & = x_0^T x_0 \; \text{when } x= x_0 \\
               & = 1. 
\end{align*}
We now show that the invariance of $V$ can be established by means of classical $\mathcal{S}$ procedure. we need to show that 
\begin{equation*}
    -(x-x_0)^T x \leq 0 \; \text{whenever } (x-x_0)^T(x-x_0)=3.  
\end{equation*}
For that purpose, it is sufficient to show that there exists a number $\lambda$ such that 
\begin{equation*}
    -(x-x_0)^Tx+\lambda \left[(x-x_0)^T(x-x_0)-3\right] \leq 0 \; \; \forall x
\end{equation*}
For $x=x_0$, $-(x-x_0)^Tx+ \lambda \left[(x-x_0)^T(x-x_0)-3\right]=-3\lambda $, thus $\lambda$ must be greater than 0. When $(x-x_0)^T(x-x_0)$ is large, then $-(x-x_0)^Tx+ \lambda \left[(x-x_0)^T(x-x_0)-3\right] = (\lambda -1)(x-x_0)^T(x-x_0)(1+o(1))$, Thus, $\lambda$ must be smaller than 1. Introduce $v=x-x_0$. Then, 
\begin{align*}
    & -(x-x_0)^Tx+\lambda \left[(x-x_0)^T(x-x_0)-3\right] \\ 
    = & -v^T v -v^T x_0 + \lambda v^T v -3 \lambda \\ 
    =& (\lambda -1) v^T v -v^T x_0 -3 \lambda 
\end{align*}
The condition of maximality of this expression is 
\begin{equation*}
    v = \frac{1}{2(\lambda-1)} x_0 
\end{equation*}
and thus the maximum of $(\lambda-1) v^T v - v^T x_0 -3 \lambda$ is 
\begin{align*}
    &\frac{-1}{4(\lambda -1)} x_0^T x_0 -3 \lambda \\
    =& \frac{-1/2-3\lambda(\lambda -1)}{\lambda -1}
\end{align*}
Choosing $\lambda=0.5$ makes this maximum negative, thus demonstrating that $S$ is invariant. Therefore, the quadratic Lyapunov function for the system hierarchy discussed in Section~\ref{new hierarchy} can be effectively generalized to polynomial invariant sets that are not necessarily Lyapunov functions by introducing the symmetric matrix 
\[
\bar{\cal P}_i = \left[ \begin{array}{cc}
{\cal P}_i & q_i \\ q_i^T & r_i\end{array} \right],
\]
for every $i\ge 1 $ in the hierarchy ,where $ \mathcal{P}_i \in \R^{\tilde{n}_i \times \tilde{n}_i}, q_i \in \R^{\tilde{n}_i}, \text{and } r_i \in \R$. We then consider the quadratic-affine function 
\[
\bar{V}(\tilde{\xi}_i)=\begin{bmatrix}
    \tilde{\xi}_i \\1 
\end{bmatrix}^T \bar{\mathcal{P}}_i \begin{bmatrix}
    \tilde{\xi}_i \\1 
\end{bmatrix}
\]
and we pose the question of the invariance of the set 
\[
\bar{\cal E}_i = \left\{ \tilde{\xi}_i \in \R^{\tilde{n}_i} \mbox{ such that } \bar{V}(\tilde{\xi}_i) \leq 0 \right\}
\]
under the action of the system $\tilde{H}_i$
in the hierarchy (\ref{ new systems Hierarcy}). Then, the invariance question can be posed as asking whether $\dot{\bar{V}} \leq 0$ whenever $\bar{V}(\tilde{\xi}_i)=0$ which can be written as 
\begin{equation*}
\begin{bmatrix}
    \tilde{\xi}_i \\ 1 
\end{bmatrix}^T \bar{\mathcal{P}}_i \begin{bmatrix}
    \tilde{\A}_j^i \tilde{\xi}_i \\ 0 
\end{bmatrix}+ \begin{bmatrix}
    \tilde{\A}_j^i \tilde{\xi}_i \\ 0 
\end{bmatrix}^T \bar{\mathcal{P}}_i
\begin{bmatrix}
    \tilde{\xi}_i \\ 1 
\end{bmatrix} \leq 0 
 \end{equation*}
for all $ j=1,\dots,N$ whenever
\begin{equation*}
    \begin{bmatrix}
    \tilde{\xi}_i \\1 
\end{bmatrix}^T \bar{\mathcal{P}}_i \begin{bmatrix}
    \tilde{\xi}_i \\1 
\end{bmatrix} =0. 
\end{equation*}
By applying $\mathcal{S}$ procedure, that holds if and only if there exists a real number $\lambda$ such that 
\begin{equation*}
    \begin{bmatrix}
    \tilde{\xi}_i \\ 1 
\end{bmatrix}^T \bar{\mathcal{P}}_i \begin{bmatrix}
    \tilde{\A}_j^i \tilde{\xi}_i \\ 0 
\end{bmatrix}+ \begin{bmatrix}
    \tilde{\A}_j^i \tilde{\xi}_i \\ 0 
\end{bmatrix}^T \bar{\mathcal{P}}_i
\begin{bmatrix}
    \tilde{\xi}_i \\ 1 
\end{bmatrix} + \lambda \begin{bmatrix}
    \tilde{\xi}_i \\1 
\end{bmatrix}^T \bar{\mathcal{P}}_i \begin{bmatrix}
    \tilde{\xi}_i \\1 
\end{bmatrix} \leq 0
\end{equation*}
for $j=1,\dots,N$ and for all $\tilde{\xi}_i \in \R^{\tilde{n}_i}$, that is, the matrix
\begin{equation}
    \begin{bmatrix}
        \bar{\mathcal{P}}_i \tilde{\A}^i_j + (\tilde{\A}^i_j)^T \bar{\mathcal{P}}_i + \lambda \bar{\mathcal{P}}_i & (\tilde{\A}_j^i)^T q + \lambda q  \\ 
        q^T \tilde{\A}_j^i + \lambda q^T & \lambda r 
    \end{bmatrix}
    \label{invariant set semi-definite program}
\end{equation}
is negative semi-definite for all $j=1,\dots,N$. for a fixed value of $\lambda$, looking for $\bar{\mathcal{P}}_i$ such that (\ref{invariant set semi-definite program}) is negative semi-definite is convex in $\bar{\mathcal{P}}_i$. In particular, setting $\lambda=0$ brings back the search for a polynomial Lyapunov function. Other values of $\lambda$ then offer other options for finding polynomial invariant sets. The authors in \cite{abdelraouf2023algebraic} developed a method to construct a Lyapunov function of degree 1 from the invariant set. This homogeneous Lyapunov function is called Algebraic Lyapunov function. 
For the system presented in example 1, the convex feasibility problem (\ref{invariant set semi-definite program}) is solved for $i=5 \text{ and } 7$ and for $\lambda=-0.05$ to find the invariant set passing through the initial condition $x_0=\left[ 1 \; \; 0\right]^T$. The results are shown in figure 4. 
\begin{figure}[H]
    \centering
    \includegraphics[scale=0.4]{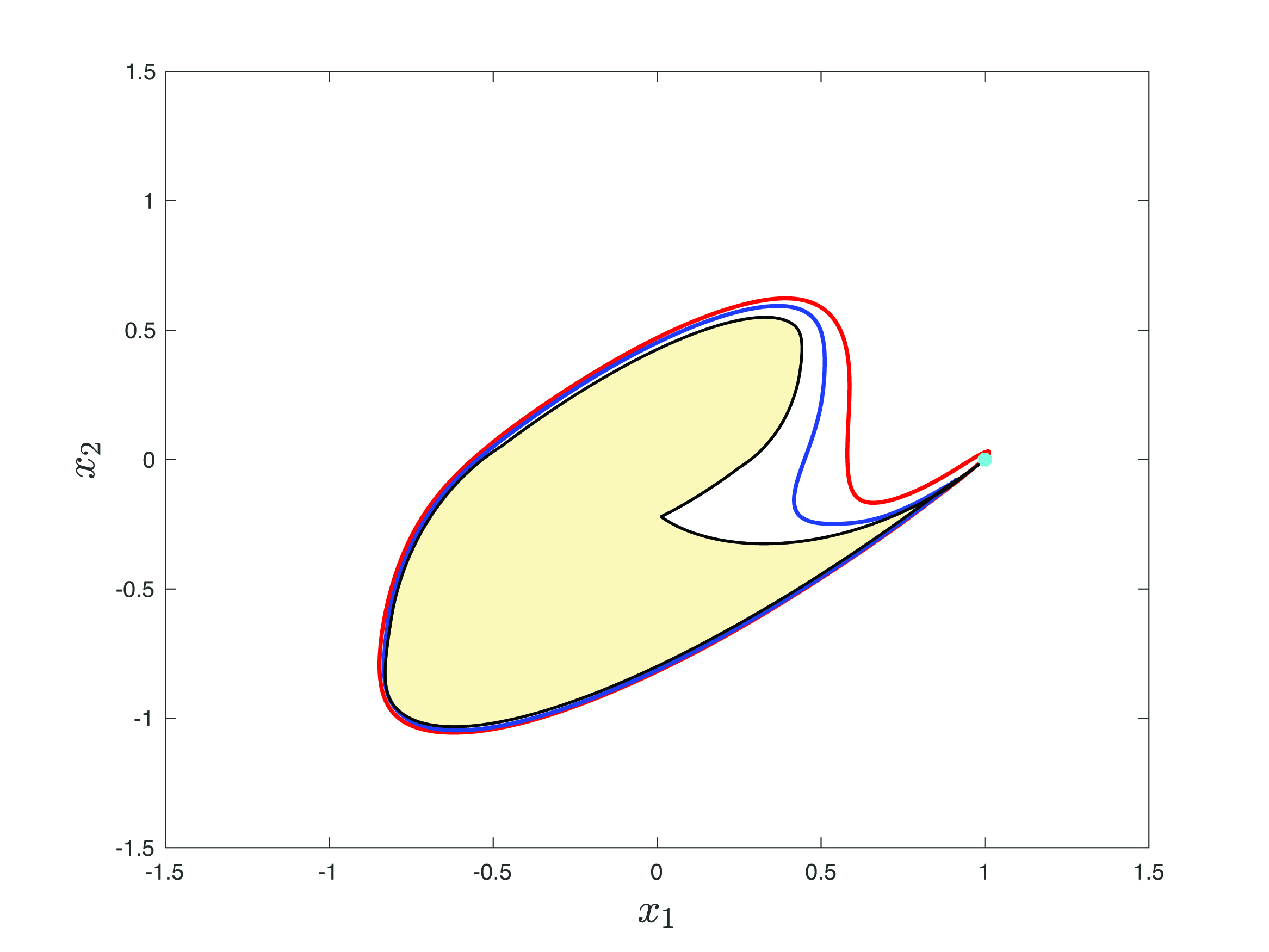}
    \caption{Simulated response for system presented in example 1. The light yellow region represents the set of reachable states from the initial condition $x_0$. The red and blue regions represents the non-homogeneous polynomial invariant sets resulted from solving (\ref{invariant set semi-definite program}) for $i=5 \text{ and } 7$ respectively and for $\lambda =-0.05$. }
    \label{fig:4}
\end{figure}

\section{Relation to Polynomial Lyapunov functions obtained via "sum-of-squares" semidefinite relaxation methods}
The authors believe the foregoing quadratic Lyapunov functions are
in direct correspondence with the polynomial Lyapunov functions for System~(\ref{ LTV system}) that may be computed by means of "Sum-of-Square" relaxations, and we formulate the following conjecture:

\noindent {\bf Conjecture:} There exists a positive definite matrix ${\cal P}_i\in \R^{\tilde{n}_i \times \tilde{n}_i}$ that satisfies
\[
\mathcal{P}_i \tilde{\A}^i_j + (\tilde{\A}^i_j)^T \mathcal{P}_i \preceq 0 \; \; \text{for all } j =1,\dots,N
\]
if and only if there exists a polynomial $G(x)$ of degree $2i$ with $G(0)=0$ and $G(x)>0, x\neq 0$, satisfying 
\[
\frac{\partial G}{\partial x} . A_j x <0 \; \; \text{for all } j=1,\dots,N
\]
proving that conjecture is left for future work.


\section{Conclusion and future work}
This work presents a procedure to compute non-homogeneous polynomial Lyapunov functions and invariant sets for LTV systems. This procedure is based on building a hierarchy of LTV systems such that the quadratic Lyapunov function for any system in the hierarchy is a non-homogeneous polynomial Lyapunov function for the base level system. These non-homogeneous polynomials are shown to be effective in LTV systems performance analysis by computing outer approximation for the reachable sets, bounds for the impulse response and invariant sets that are not Lyapunov level sets. In future work,  the theoretical relation between the generated non-homogeneous polynomials and other polynomials computed by means of Sum-of-Square relaxations will be considered. 

\bibliographystyle{ieeetr}

\bibliography{main.bib}
\appendix
 \subsection{Dimensionality reduction}
Using Kronecker product produces redundant states that significantly increase the dimension of the lifted system. For example, if $x \in \R^2$, then $  \xi_2  \in \R^{4}$ such that $ \xi_2 =  x \otimes x= \begin{bmatrix} x_1^2 & x_1 x_2 & x_1x_2 & x_2^2\end{bmatrix}^T$. To get rid of the redundant states, a simple procedure introduced in \cite{abate2020lyapunov}. This procedure is based on introducing a new variable $\eta_2 \in \R^3$ such that $\xi_2 =W_2 \eta_2$ where 
\[
W_2 = \begin{bmatrix}
     1 &0&0\\
    0 & 1& 0 \\
    0 & 1 & 0\\
    0 & 0 & 1
\end{bmatrix}.
\]
For second order systems, i.e. $n=2$, this method can be generalized for $k\geq 2 $ as $\xi_k=W_k \eta_k$ such that
\[
W_k = \left[\begin{array}{cc}
    W_{k-1} & 0_{2^{k-1}}  \\
     0_{2^{k-1}}& W_{k-1} 
\end{array}\right]
\]
 and $W_1=I_2$. So, the dimension of the system $H_k$ in the hierarchy (\ref{systems Hierarcy}) is reduced from $n^k$ to $k+1$. Without loss of generality, if $\mathcal{M}=\{A\}$, then every system $\dot{\xi}_k=\A^k \xi_k $ in the hierarchy (\ref{systems Hierarcy}) can be written as $\dot{\eta}_k=W_k^+ \A^k W_k \eta_k $ where $W_k^+$ is the pseudo-inverse of $W_k$ for $k\ge1$. We can then use the  reduced dimension state vector $\eta_k$ to rebuild the new hierarchy (\ref{ new systems Hierarcy}). Hence, $\tilde{\eta}_i=\left[\eta_1^T,\eta_2^T,\dots, \eta_i^T\right]^T$ which reduces the dimension of the system $\tilde{H}_i$ in the hierarchy (\ref{ new systems Hierarcy})  from $2(2^i-1)$ to $i(i+3)/2$. Thus, the reduced dimension system $\tilde{H}_i$  will be $\tilde{\eta}_i= \tilde{W}_i^+ \tilde{\A}^i \tilde{W}_i \tilde{\eta}$ where $\tilde{W}_i = \text{diag}(W_1,W_2,\dots,W_i)$.

\end{document}